\newtheorem{mydef}{Proposition}
\begin{document}

\date{\today}

\title{A new method to reduce the number of time delays in a network}

\author[1,*]{Alexandre~Wagemakers}
\author[1,2,3]{Miguel~A.F.~Sanju\'an}

\affil[1]{Nonlinear Dynamics, Chaos and Complex Systems Group, Departamento de  F\'isica, Universidad Rey Juan Carlos, M\'ostoles, Madrid, Tulip\'an s/n, 28933, Spain}

\affil[2]{Department of Applied Informatics, Kaunas University of Technology, Studentu 50-415, Kaunas LT-51368, Lithuania}

\affil[3]{Institute for Physical Science and Technology, University of Maryland, College Park, Maryland 20742, USA}

\affil[*]{alexandre.wagemakers@urjc.es}

\begin{abstract}

Time delays may cause dramatic changes to the dynamics of interacting oscillators. Coupled networks of interacting dynamical systems can have unexpected behaviours when the signal between the vertices are time delayed. It has been shown for a very general class of systems that the time delays can be rearranged as long as the total time delay over the constitutive loops of the network is conserved. This fact allows to reduce the number of time delays of the problem without loss of information. There is a theoretical lower bound for this number that can be numerically improved if the time delays are commensurable. Here we propose a formulation of the problem and a numerical method to even further reduce the number of time delays in a network.
\end{abstract}

\maketitle
\section{Introduction}

Transmission delays are intrinsic to any process that exchanges information. While in many applications these time delays are small enough to be neglected, in other cases they have a critical influence on the dynamics. Examples of connected dynamical systems appear frequently in physics, engineering and natural sciences \cite{soriano2013complex,liang09,perez11}. A problem of interest in the dynamical systems community is the synchronization of coupled oscillators. Some progress has been made to understand the synchronization of oscillators when an identical time delay is present on every connection of a coupled system \cite{yeung1999time,earl2003synchronization}. Otherwise, the problem of synchronization with nonidentical time delays spreaded accross a network is still difficult to address, yet there are succesfull attempts of analysis using a mean field approach of the dynamical system \cite{ott_delay,petkoski2016heterogeneity}.

 In an effort to simplify the study of such networks, a new method called componentwise time-shift transformation \cite{lucken_reduction_2013} has been developed in order to transform the time delays of the network. This transformation allows to change the time delays on the network following some precise rules without affecting the dynamics of the system \cite{lucken_reduction_2013,lucken_classification_2015}. The purpose of the transformation is to set $n-1$ time delays to zero, being $n$ the number of vertices of the network. A brief summary is described in the next section.

Here we take on this idea and propose a new formulation of this transformation that allows to use common optimization algorithms to reduce the number of time delays on a network. Our results show that on networks with different topologies, the number of time delays that can be reduced to zero is larger that $n-1$. We claim that the number $n_z$ of zero time delays can be larger than the lower bound $n_z = n-1$ in the case of a set of commensurable time delays. Moreover, within our framework we can devise other optimization strategies to find a suitable configuration of time delays given a specific need.

The technique described in \cite{lucken_classification_2015} hinges on the observation that we can change the time delays in a network with a single cycle of length $n$ without altering the dynamics as long as the sum of the time delays around the cycle is conserved. The authors extended the reasoning over arbitrary networks by establishing the constitutive constraints between the time delays in a network. In this article, we reformulate the fundamental property of conservation of the time delays over a loop using algebraic graph theory. The problem of finding minimal time delays on the network is next transformed into a linear optimization problem. We show that the simplex optimization algorithm \cite{bazaraa2011linear} provides a solution for the transformed time delays where at least $n-1$ time delays are set to zero.

\section{Componentwise time-shift transformation}
We consider a graph $G$ with a collection of $l$ oriented edges $e_{i}$ and $n$ vertices $v_i$. At each vertex we have a very general dynamical system in the form of a system of $n$ delay differential equations
\begin{equation} \label{sys_din}
  \frac{dx_i}{dt} = f_i(x_i, x_j(t-\tau_{k})_{k \in S_i}),
\end{equation}
with $i=1,...,n$ and $S_i$ is the set of indices $k$ such that the edges $e_k$ connect the vertex $j$ to the vertex $i$. We assume a discrete time delay $\tau_{k}$ on the edge $e_k$.

The previous system in Eq.~(\ref{sys_din}) can be transformed with a redefinition of the time delays $\tau_{k}$ without changing the dynamical properties of the system. We set
\begin{equation} \label{sys_din_shift}
\frac{dy_i}{dt} = f_i(y_i, y_j(t-\tilde \tau_{k})_{k \in S_i}),
\end{equation}
with the following change of variables
\begin{align} \label{delay_shift}
y_i(t) = x_i(t-\eta_i) \\
\tilde \tau_{k} = \tau_{k} + \eta_{s(k)} - \eta_{t(k)},
\end{align}
being $\eta_i$ constants and $s(k)$ is the source vertex of the edge $k$ and $t(k)$ the target vertex of the same edge. The authors in~\cite{lucken_classification_2015} noticed that the {\it algebraic sum} of the time delay around any cycle of the network is constant for every choice of the time-shifts $\eta_i$. The term {\it algebraic sum} means here that, given an oriented cycle in the graph, the time delay associated to the edges on the cycle with the same orientation should be summed up and the time delays on edges with opposite direction subtracted.

Now the problem is to find the time-shifts $\eta_i$ associated to each vertex for a desired configuration of time delays $\tilde \tau_{k}$.

\section{Graph characteristics}

The topology of the graph can be described in terms of algebraic structures associated to the topology \cite{biggs1993algebraic}. We first give some definitions to set the context of the work. We define $G(V,E,A)$ as a directed and connected graph, where $V$ is a set of $n$ vertices and $E$ a set of $l$ directed edges. Before going into the details, we need to number the edges from 1 to $l$ and we note $\tau_k$ as the time delay of the edge $e_k$.

To represent the connectivity, we define the incidence matrix $A \in \mathbb{Z}^{n\times l}$  that relates the vertices to the edges. The elements $a_{jk}$ of the matrix $A$ are expressed in the following way: $a_{jk}=1$ if the edge $e_{k}$ points towards the vertex $j$ and $a_{jk}=-1$ if the edge points outwards. All other entries are zero. All the information about the connections of the network is contained in this matrix. It is also possible to develop the method for multiple edges connecting two vertices. We restrain here the case to a maximum of two edges to represent a bidirectional connection.

If the graph is connected, or weakly connected, we can define an acyclic subgraph called spanning tree that connects all the vertices and have exactly $n-1$ edges. This structure is important for the decomposition of the graph $G$ into elementary cycles. Given a spanning tree $T$ and an edge $e$ not in $T$, there is a unique cycle in $G$ containing only edges of $T$ and $e$. As a consequence, we can decompose the network into $c=l-(n-1)$ independent cycles. This decomposition can be expressed as a matrix $B \in \mathbb{Z}^{(l-(n-1))\times l}$ that expresses the cycle space associated to the tree $T$. First, we set the orientation of the cycle as the direction of the edge {\emph not} in $T$. Being $b_{jk}$ an element of $B$, we set $b_{jk}=1$ if the edge $k$ is in the cycle $j$ with the same direction, and $b_{jk}=-1$ if the orientations are opposite. All other numbers are zero. This matrix $B$ is of special interest for our study since the sum of time delays around each cycle is given by a simple matrix multiplication
\begin{equation}
B \boldsymbol \tau ={\boldsymbol \sigma},
\end{equation}
where $\boldsymbol \tau =(\tau_1 \dots \tau_l)^\intercal$ is the column vector of the time delay $k$ associated to the edge $e_k$ and $v^\intercal$ denotes the transpose of the vector $v$. The vector $\boldsymbol \sigma$ is what matters for the dynamics of the coupled system of delay differential equations. The time delays can be shuffled and changed into a new vector $\tilde{\boldsymbol \tau}$, but the vector $\boldsymbol \sigma$ should be constant, so that
\begin{equation} \label{eq:lineal_del}
B \boldsymbol \tau =B \tilde{\boldsymbol \tau}.
\end{equation}
This is the key property of the graph that we need to explore the space of possible solutions of $\tilde {\boldsymbol \tau}$.

The last necessary step to obtain the full characterization of the system is to derive the time-shifts $\eta_i$ in Eq.~(\ref{delay_shift}) that can lead us back to the time series of the original configuration of Eqs.~(\ref{sys_din}). These time-shifts $\eta_j$ associated to the vertex $j$ in the network is computed from a recursive relation on the spanning tree $T$ \cite{lucken_classification_2015},
\begin{equation} \label{eq_rec}
\tilde \tau_k - \tau_k = \eta_{s(k)} - \eta_{t(k)},
\end{equation}
with $s(k)$ the source vertex of the edge $k$ and $t(k)$ the target vertex of the same edge. Since the time-shifts are defined up to a constant, we can choose the value $\eta_1=0$ as a reference for all the other vertices. The rest of the time-shifts can be obtained from the incidence matrix $A$ restricted to the edges of the spanning tree. The columns of the incidence matrix contain exactly the values $s(k)$ and $t(k)$ for any edge $k$. If we partition the edges into two subsets of edges in and out of the spanning tree, we can rearrange the incidence matrix in two blocks:
\begin{equation} \label{split_incidence}
A= \left[\begin{array}{c|c} A_{in} & A_{out} \end{array}\right].
\end{equation}
The first matrix $A_{in} \in \mathbb{Z}^{n \times (n-1)}$ contains the information on the edges in the spanning tree. We also split the time delays in two similar sets ${\boldsymbol \tau}_{in}$ and ${\boldsymbol \tau}_{out}$ and we define a column vector with the time-shifts $\boldsymbol \eta=(\eta_1 \dots  \eta_n)^\intercal$. From the recursive relation given in Eq.~(\ref{eq_rec}), we can infer that
\begin{equation}
{\boldsymbol {\tilde \tau}}_{in} -{\boldsymbol \tau}_{in} = A_{in}^\intercal \boldsymbol \eta.
\end{equation}
However, we are looking for the time-shifts $\boldsymbol \eta$ as a function of the time delays of the tree $T$. Notice that the matrix $A_{in}$ has a rank $n-1$ and that the vector $\boldsymbol \eta$ has $n-1$ unknowns since $\eta_1=0$. Consequently, we can construct a full rank square matrix $A_{r}^\intercal$ by removing the first column of $A_{in}^\intercal$. We define the vector $\boldsymbol \eta^- = ( \eta_2 \dots \eta_n )^\intercal$ and transform the last equation into:
\begin{equation}
\boldsymbol{\tilde \tau}_{in} -\boldsymbol \tau_{in} = A_{r}^\intercal \boldsymbol \eta^- .
\end{equation}
Now we have a linear system with a single solution:
\begin{equation}
\boldsymbol \eta^-  = (A_{r}^\intercal)^{-1} (\boldsymbol{\tilde \tau}_{in} -\boldsymbol \tau_{in}) .
\end{equation}

The matrices $A$ and $B$ in Eq. (\ref{eq:lineal_del}) and (\ref{split_incidence}) are straightforward to derive and have a strong dependence to each other \cite{deo_graph_2016}. Notice also that the matrices $A_{in}$, $A_{out}$ and $B$ depend on the initial chosen spanning tree. We can demonstrate that this choice does not affect the space of possible solutions that can be reached with Eq.~(\ref{eq:lineal_del}). Any spanning tree can give us a valid basis to reconfigure the time delays in the network.

\section{Optimization of network time delays}

The main problem is stated in Eq.~(\ref{eq:lineal_del}) where all the possible vectors $\boldsymbol{\tilde \tau}$ are contained. We have to restrain however the problem to positive time delays $\tau_k$ to avoid complications with negative time delays. This consists of finding a vector $\boldsymbol{\tilde \tau}$ that will minimize the sum of the time delays over the network. This problem takes naturally the form of a standard linear program, that is,
\begin{equation} \label{lp_pb}
\begin{array}{rl}
\textrm{Minimize: }& {\textstyle \sum} \tilde \tau_k\\
&\\
\textrm{Constrained to: }& B \boldsymbol{\tilde \tau} = \boldsymbol{\sigma}\\
& \tilde \tau_k \geq 0
\end{array}
\end{equation}
This standard linear program can be solved with conventional techniques such as the simplex optimization algorithm \cite{bazaraa2011linear}. We show that the simplex method reduces the time delays of the network with at least $n-1$ zero time delays.

\begin{mydef}
Using the simplex algorithm, we can guarantee that there is a feasible solution $\boldsymbol{\tilde \tau}$ to the problem in  Eq. (\ref{lp_pb}) such that {\emph at least} $n-1$ time delays in the vector $\boldsymbol{\tilde \tau}$ are set to zero.
\end{mydef}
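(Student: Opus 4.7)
The plan is to reduce the proposition to the fundamental theorem of linear programming, after verifying that the feasibility set has the right dimensional structure. The key observation is that the constraint matrix $B$ has exactly $c = l - (n-1)$ rows, and I would first argue that these rows are linearly independent. This is a classical fact from algebraic graph theory: for a connected graph with $n$ vertices and $l$ edges, the cycle space has dimension $l - n + 1 = c$, and the rows of $B$ constructed from a spanning tree form a basis of this space. Consequently, $\mathrm{rank}(B) = c$.

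Next I would check that the linear program in Eq.~(\ref{lp_pb}) is feasible and bounded. Feasibility is immediate because the original delay vector $\boldsymbol{\tau}$ itself satisfies $B\boldsymbol{\tau} = \boldsymbol{\sigma}$ with $\tau_k \geq 0$ (the physical assumption on delays). Boundedness of the objective $\sum \tilde\tau_k$ from below by $0$ follows from the non-negativity constraints. Therefore the LP admits an optimum, and by the fundamental theorem of linear programming the simplex algorithm terminates at a basic feasible solution (a vertex of the polyhedron of feasible $\boldsymbol{\tilde\tau}$).

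The final step is purely combinatorial and forms the core of the argument. A basic feasible solution of a standard-form LP with constraint matrix of rank $r$ has at most $r$ strictly positive components, the so-called basic variables, while all non-basic components are zero. Applying this to our setting with $r = \mathrm{rank}(B) = c$, the simplex solution $\boldsymbol{\tilde\tau}$ has at most $c$ nonzero entries out of $l$ total, so it has at least
\begin{equation}
l - c = l - \bigl(l - (n-1)\bigr) = n - 1
\end{equation}
zero entries, which is precisely the claim.

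I do not anticipate a serious obstacle here: the argument is a direct application of a classical LP result combined with a standard rank computation. The only point that requires a moment of care is justifying $\mathrm{rank}(B) = c$ independently of the chosen spanning tree; this is handled by the graph-theoretic fact that $B$ is a cycle-space basis matrix, already hinted at earlier in the paper. One should also briefly remark that the bound $n-1$ is tight only in general and may be exceeded for commensurable delays, matching the discussion in the introduction.
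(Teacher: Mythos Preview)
Your proposal is correct and follows essentially the same route as the paper: both arguments amount to observing that any basic feasible solution of the standard-form LP with constraint matrix $B$ of rank $c=l-(n-1)$ has at most $c$ nonzero components, hence at least $n-1$ zeros. Your version is slightly more complete in that you explicitly justify $\mathrm{rank}(B)=c$ and check feasibility and boundedness, points the paper leaves implicit.
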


\begin{proof}
In the simplex algorithm, there is a first search for a basic feasible solution to the problem in a $l$-dimensional space. For such a solution, the columns of the matrix $B$ are rearranged into $[ D | Z ]$ where $D$ is an invertible $c\times c$ matrix and $Z$ is  a $c \times (n-1)$ matrix. The vector  $\boldsymbol{\tilde \tau} =(\boldsymbol \tau_D \boldsymbol \tau_Z)$ solution to the equation $B \boldsymbol{\tilde \tau} = \boldsymbol{\sigma}$ can be decomposed into $\boldsymbol \tau_D=D^{-1} \boldsymbol{\sigma}$ and $\boldsymbol \tau_Z= \boldsymbol 0$ a vector with all zeros. Being $n-1$ the size of the vector $\boldsymbol \tau_Z$, we have a valid reduction of the network with $n-1$ time delays set to zero. The other part $\boldsymbol \tau_D$ contains only positive time delays.
\end{proof}
The existence of one basic feasible solution gives us a valid reduction, however the algorithm looks further for an optimal solution minimizing the sum of the time delays. The solver will find a solution with $n_z\geq n-1$ and a total sum of the time delays below or equal than the initial sum of the time delays $\sum \tau_k$. There are plenty of efficient implementations of the simplex algorithm to solve linear programs \cite{bazaraa2011linear}, and we can obtain the reduction of the network in a polynomial time.

We now have all the ingredients to construct a optimized network. All we need is any spanning tree $T$, the incidence matrix $A$ and a fundamental loop matrix $B$ of the graph $G$.

\section{Numerical experiments}

The results of the optimization method may vary with the topology of the network and the statistical distribution of the time delays. Initially, we will focus on the effects of the topology of the network on the optimization output by setting identical time delays on every edge. To quantify the results of the algorithm, we have selected two representative parameters: the ratio $r_z=n_z/(n-1)$ that represents the ratio of the number of zero time delays over the lower bound $n-1$, and the ratio
\begin{equation} \label{def_rs}
r_s=1-\frac{\sum\limits_{k=1}^l \tilde \tau_k}{\sum\limits_{k=1}^l \tau_k},
\end{equation}
which measures the reduction of the sum of the time delays after the optimization process. A number $r_s=0$ means that the algorithm was unable to find a lower sum of time delays, while $r_s=1$ happens when all the time delays have been reduced to zero.

To understand the role of the topology we have centered our attention on two markers: the density $\rho$ of the graph and the second moment of the degree distribution $\langle k^2\rangle$ as a measure of degree heterogeneity \cite{small2014randomnet,newman2010networks}. The density of the graph is the ratio between the actual and the maximum number of edges possible for a given number of vertices. In average, it is equal to $\rho=\langle k\rangle/(n-1)$ where $\langle k\rangle$ is the mean vertex degree.

\begin{figure}
\begin{center}
\includegraphics[width=10cm]{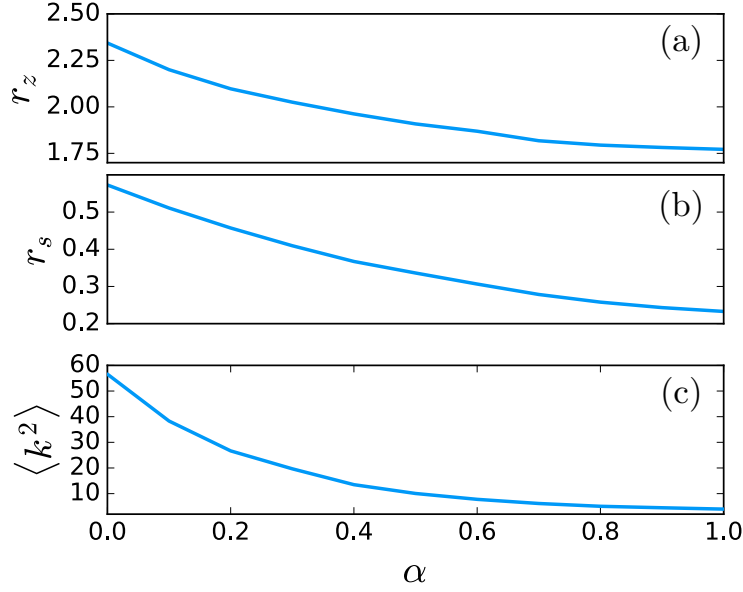}
\caption{\label{fig1} {\bf Influence of the second moment of the vertex degree distribution $\langle k^2\rangle$ on the optimization}. The variance $\langle k^2\rangle$ is modulated with a parameter $\alpha$. The plots represent in (a) the ratio $r_z=n_z/(n-1)$, in (b) the ratio $r_s$ defined in Eq. (\ref{def_rs}), and in (c) the variance of the vertex degree $\langle k^2\rangle$. The correlation between the variation of $\langle k^2\rangle$ and the two ratios $r_z$ and $r_s$ is clear. The simulations have been averaged over 30 networks of $n=400$ vertices with average degree $\langle k\rangle=4$ for each value of $\alpha$.}
\end{center}
\end{figure}

The influence of the degree heterogeneity of the network can be studied by interpolating a network between an Erd\"os-R\'enyi random network and a scale-free network \cite{gomez2006scale}. The shape of the degree distribution can be changed continuously between the two limiting cases with a single parameter $\alpha$. For $\alpha=0$ we obtain a scale-free network with an exponent $2.8$ of the power law degree distribution. When $\alpha=1$ we get a Erd\"os-R\'enyi random network with a probability $p=\langle k\rangle/n$ to have an edge between two vertices. The second moment of the degree distribution $\langle k^2\rangle$ diverges for the  scale-free network in the thermodynamic limit $n\to \infty$. For the Erd\"os-R\'enyi random network the degree distribution is binomial and the variance tends to $n p (1-p)$. So we can expect that $\langle k^2\rangle$ decreases when $\alpha$ takes values from 0 to 1. The results in Fig. \ref{fig1} show a clear correlation between the ratios $r_z$, $r_s$ and the variance $\langle k^2\rangle$ of the vertex degree. For this simulation, the network is sparse with a constant low density $\rho=0.01$. The average path length is almost constant for all the simulation and, as we will explain later, we have ruled out the influence of the clustering coefficient. It seems that the heterogeneity in the degree of the vertex has an important role in the possible outcome of the reduction process, although the exact relationship between the three measures $\langle k^2\rangle$,  $r_z$ and $r_s$ is elusive.

The density $\rho$ of the graph has also a significant influence on the measures $r_z$ and $r_s$. To illustrate this assertion we will build a network with an adjustable density. Nevertheless it would be good to eliminate the influence of the variance $\langle k^2\rangle$. Noticing that the Watts-Strogatz random graph model with a rewiring probability $p$ and mean degree $\langle k\rangle$ has a variance degree roughly equal to $\langle k^2\rangle \simeq \langle k\rangle p(1-p)\simeq \langle k\rangle p$ for a small $p$, we can construct a network with an arbitrary density and almost a constant degree variance $\langle k^2\rangle$. We start with a regular regular ring network where $n$ vertices are coupled to the $k$ nearest neighbors. In this case the network is symmetric and all the vertices have the same degree. We break this symmetry by rewiring each edge with a probability $p=\varepsilon/\langle k\rangle$ leading to a variance of the vertex degree approximately $\langle k^2\rangle \simeq \langle k \rangle p=\varepsilon$. This is very similar to the small-world model construction but we focus on tuning the density while keeping the variance of the vertex degree constant $\langle k^2\rangle$ and very low. We can also assure with this construction that the graph is weakly connected. The results summarized in Fig. \ref{fig2} clearly uncover the dependence between the density and the ratios $r_z$. On the one hand, $r_z$ is directly proportional to $\rho$ and on the other hand $r_s$ and $\rho$ seem to follow a power law functional as the log-log plot in Fig. \ref{fig2} (c) suggests. This depedence means that $\rho$, and consequently the mean degree $\langle k \rangle$, has relevant influence on the optimization.

\begin{figure}
\begin{center}
\includegraphics[width=11cm]{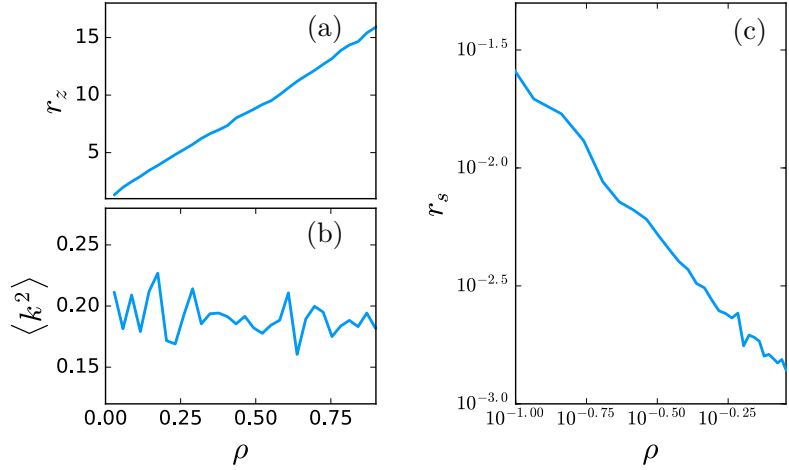}
\caption{\label{fig2} {\bf Influence of the density $\rho$ of the graph}. The figure shows the effect of the density $\rho=\langle k\rangle/(n-1)$ of a random graph on the optimized delays results. The panels represent in (a) the ratio $r_z=n_z/(n-1)$, in (b) the variance of the vertex degree $\langle k^2\rangle$, and (c) the ratio $r_s$. The ratio $r_z$ is directly proportional to $\rho$ and $r_s$ is inversely proportional to $\rho$. The simulations have been averaged over 40 realizations of a network with $70$ vertices for each value of $\rho$.}
\end{center}
\end{figure}

Other factors such as the clustering coefficient or the average path length do not seem to have a significant effect on the ratios $r_z$ and $r_s$. The clustering coefficient of a scale-free network has been tuned using a technique that adds triangles in the network without changing the degree distribution of the network \cite{holme2002growing}. While the clustering coefficient of the network evolves from 0 to 0.3, the ratios $r_z$ and $r_s$ remain almost unchanged. This is a counterintuitive result since it would have been reasonable to think that the presence of more triangles in the network would have brought an enhancement of the results.

\begin{figure}
\begin{center}
\includegraphics[width=10cm]{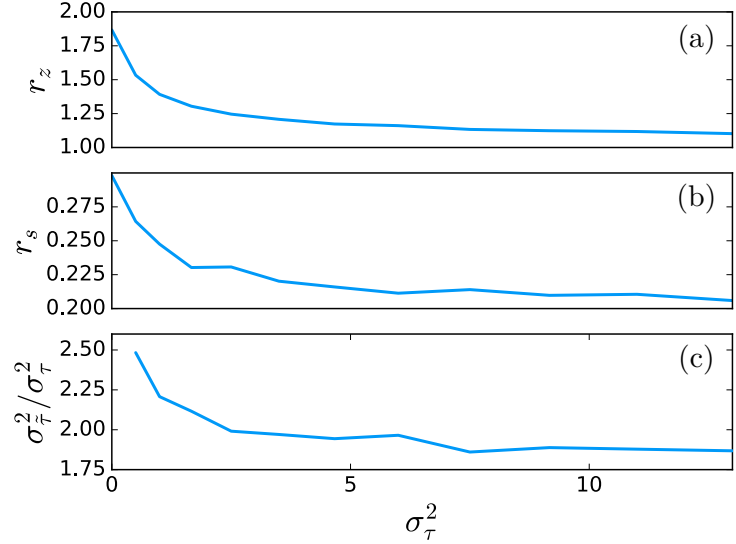}
\caption{\label{fig3} {\bf Influence of the variance $\sigma_\tau^2$ of delay distribution}. This plot shows the importance of the distribution of the time delay on the results of the algorithm. As the width of the distribution increases, the performance of the distribution get worse. The panels represent in (a) the ratio $r_z=n_z/(n-1)$, in (b) the ratio $r_s$, and (c) the ratio between the variance of the time delays $\sigma_{\tilde \tau}^2$ after and $\sigma_\tau^2$ before the optimization. The simulations have averaged over 60 realizations of a network with $n=400$ vertices with mean degree $\langle k\rangle=4$ for each value of $\sigma_\tau^2$.}
\end{center}
\end{figure}

The topology of the network has certainly a strong effect on the outcome of the optimization. However, the distribution of the time delays $\tau_k$ is also critical as shown in Fig. \ref{fig3} (a) and (b) where the measures $r_z$ and $r_s$ are represented as a function of the variance of the time delay distribution. The time delays have been chosen randomly in the discrete interval $[1; N_{max}]$ in order to control the variance $\sigma_\tau^2$ of the distribution. As this variance increases, the algorithm has more difficulty to find a solution with $r_z\geq 1$. The ratio $r_s$ also tends to diminish but it remains above 0. In Fig. \ref{fig3} (c) we can see that the ratio between the variance of the time delays $\sigma_{\tilde \tau}^2$ after and $\sigma_{\tau}^2$ before the optimization is larger than 1. In general the variance of the time delay distribution will increase after optimization but it seems that the ratio  $\sigma_{\tilde \tau}^2 /\sigma_{\tau}^2$  is bounded.

When the time delays are distributed following a continuous real valued distribution, it is almost impossible to find a solution with $n_z>n-1$. The simplex method finds only the basic feasible solution $n_z=n-1$, which is the minimum number of zero time delays achievable \cite{lucken_classification_2015}. If there are special relations between time delays, for example if they are all identical, it might be possible to reach a better solution. In the case of incommensurable real-valued time delays, such relations vanish. However, the simplex algorithm is still capable of finding a lower total sum of time delays, which may be of interest.

While real valued time delays are more general, integer valued time delays are very relevant when it comes to the numerical integration of differential equations. For the numerical algorithms involving finite and constant step size, the values of the time delays, that may have been issued from a continuous distribution, have to be discretized and rounded to the closest integer multiple of the time step. The set of continuous time delays is transformed into a new set o commensurable time delays that will give much better results from the point of view of the optimization.

The previous examples focus on the properties of the networks and delay distribution and do not involve any specific dynamical system. We present an application where a network of Kuramoto phase oscillators is coupled with time delays \cite{yeung1999time}. The phase oscillator model is a very simple abstraction of the essential properties of limit cycle oscillators. We can use this model to test our optimization method on a complex network of simple dynamical systems. The setup consists of a unidirectional Erd\"os-R\'enyi random network with average degree $d$, where the vertices represent Kuramoto oscillators with an identical intrinsic frequency $\omega$. The edges of the network represent a time delayed interaction chosen randomly according to a statistical distribution. The coupled delay differential equation can be written as
\begin{equation}
\frac{d\theta_i}{dt}=\omega + \frac{K}{d}\sum_{k\in S_i} (\theta_j(t-\tau_k) - \theta_i),
\end{equation}
where $S_i$ is the set of edges going from vertex $j$ to the vertex $i$ and $K$ is the coupling strength. We distribute the time delays $\tau_k$ following a uniform distribution in the continous interval $[\tau_m , 0.5 + \tau_m]$. Notice however that since we integrate the equation numerically, we have to discretize this interval, as said earlier, due to the finite time step size of the algorithm. In order to test the validity of the reduction in a dynamical system, we use the average frequency of the network since this measurement is independent of the initial history of the delay differential equation \cite{nordenfelt2014frequency}.

\begin{figure}
\begin{center}
\includegraphics[width=9cm]{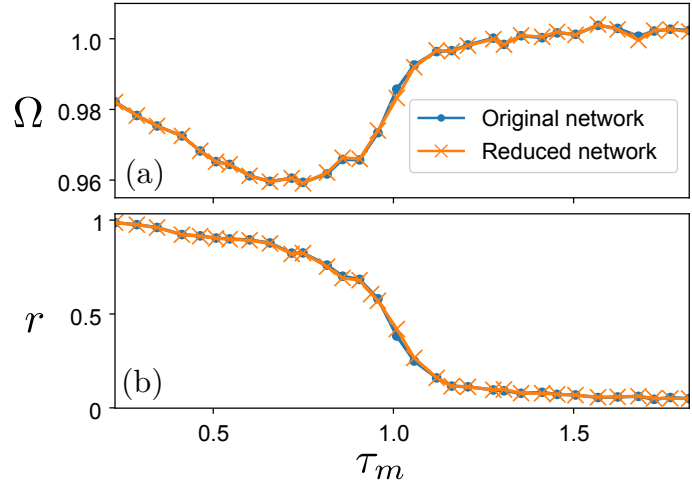}
\caption{\label{fig4} {\bf Average network network frequency $\Omega$  and order parameter $r$ of a coupled network of Kuramoto phase oscillators coupled with time delays.} The curves, that are superposed in Fig. (a), represent the average network frequency for the original (dot markers) and reduced network (cross markers). For each dot the average network frequency has been computed and averaged for several initial histories of the network to avoid numerical artifacts caused by the integration method. Both the original and reduced network lead to the same asymptotic frequency. In Fig. (b), the mean value of the order parameter has been represented for the two sets of simulations. Here again both curves agree on the same synchronization value for the two kind of network. Parameters are: $\omega = 1 $, $K=0.1$, $\tau_k \in [\tau_m; \tau_m+0.5]$, $n=50$, $d=4$.}
\end{center}
\end{figure}

We let evolve the network in time and we compute the average frequency $\Omega_i$ of each oscillator over a finite interval of time $T$
\begin{equation}
\Omega_i=\frac{1}{T}\int_0^T \dot \theta_i ~  dt.
\end{equation}
Then we compute the average network frequency $\Omega$ in this manner
\begin{equation}
\Omega=\frac{1}{n}\sum_i \Omega_i.
\end{equation}
This last frequency is independent of the chosen initial conditions and should be the same for both the original network and the reduced network given by Eq. (\ref{lp_pb}). In Fig. \ref{fig4} (a), we show an example where a network of $n=50$ oscillators has been simulated with a realization of the random time delays. The average frequency of the original and reduced network are consistent in both simulations showing that the asymptotic behavior is the same.

Another quantity of interest in the study of coupled oscillators is the synchronization order parameter
\begin{equation} \label{ord_param}
r(t) = \frac{1}{n} \left | \sum_{j=0}^n e^{i \theta_j(t)} \right|.
\end{equation}
This parameter can be averaged over time to characterize the state of the network with a single number
\begin{equation}
r = \frac{1}{T} \int_0^T r(t) dt.
\end{equation}
We cannot compare directly the order parameters of the original and reduced network since the time series are related through the change of variable in Eq. (\ref{delay_shift}). Being $\tilde \theta_j$ the variables of the reduced system, we can compute the order parameter introducing the time-shifts $\eta_j$ in the Eq. (\ref{ord_param})
\begin{equation} \label{ord_param}
\tilde r(t) = \frac{1}{n} \left | \sum_{j=0}^n e^{i \tilde\theta_j(t)} e^{-i \eta_j} \right|.
\end{equation}
Figure \ref{fig4} (b) represents the average order parameter $r$ and $\tilde r$ for the original and reduced network for the same parameters as the previous example. Both results overlap almost exactly meaning that the dynamics in the reduced system is conserved.

The simulations have been performed with the programming language Julia \cite{julialang} using LightGraphs, JuMP and Coin-or Linear Programming (Clp) packages.

\section{Conclusions}

Reorganizing the time delays in a network does not seem to be an easy task at first sight. But once the basic mechanisms of time delay conservation are understood, it is possible to change the time delays and at the same time to conserve the dynamical properties of the network. Our formulation along with the componentwise time-shift transformation technique opens a way to reduce even further the time delay space. When the problem is stated in the form of a linear program, the simplex algorithm provides a higher number of zero time delays than the theoretical lower bound $n_z$, that corresponds to the dimension of the cycle space of the network. It also finds the solution with the lowest sum of time delays, which can represent a reduction up to 60\% of the initial sum of the time delays.

The numerical integration of coupled dynamical systems with the presence of different time delays among the network usually involves a high computational and storage cost. The memory usage can be reduced up to 30\% with the optimization of the delay of the network. Another possible application is to modify the fitness function of the optimization algorithm such that the time delays fit a desired distribution more suitable to the problem at glance.

\bibliographystyle{naturemag}

\section*{Acknowledgements}

This work was supported by the Spanish Ministry of Economy and Competitiveness
under Project No. FIS2013-40653-P and by the Spanish State Research Agency (AEI)
and the European Regional Development Fund (FEDER) under
Project No. FIS2016-76883-P. MAFS acknowledges the
jointly sponsored financial support by the Fulbright Program
and the Spanish Ministry of Education (Program No. FMECD-ST-2016).

\section*{Author contributions statement}

A.W. and M.A.F.S. devised the research. A.W. performed the numerical simulations. A.W., and M.A.F.S. analyzed the results and wrote the paper.

\section*{Additional information}

\textbf{Competing financial interests}: The authors declare no competing financial interests.

\end{document}